\documentclass{article}

\usepackage{arxiv}

\usepackage[utf8]{inputenc} 
\usepackage[T1]{fontenc}    
\usepackage{hyperref}       
\usepackage{url}            
\usepackage{booktabs}       
\usepackage{amsfonts}       
\usepackage{amsmath}
\usepackage{amsthm}
\usepackage{bm}
\usepackage{nicefrac}       
\usepackage{microtype}      
\usepackage{graphicx}
\usepackage{array}
\usepackage{multirow}
\usepackage{textcomp}
\usepackage{cite}           
\usepackage{doi}

\graphicspath{{Figure/}}

\theoremstyle{plain}
\newtheorem{lemma}{Lemma}
\newtheorem{proposition}{Proposition}

\newcommand{\deff}{d_{\mathrm{eff}}}
\newcommand{\Tr}{\operatorname{Tr}}
\newcommand{\R}{\mathbb{R}}

\title{Can the Quantum Fisher Effective Dimension Judge a Hybrid Quantum Neural Network?}

\author{
	Md Masum Mia \\
	Qatar University\\
	Doha, Qatar \\
	\texttt{masum.mia@qu.edu.qa} \\
	\And
	Amith Khandakar\thanks{Corresponding author: Amith Khandakar (amitk@qu.edu.qa)} \\
	Qatar University\\
	Doha, Qatar \\
	\texttt{amitk@qu.edu.qa} \\
	\And
	Mohamed Arselene Ayari \\
	Qatar University\\
	Doha, Qatar \\
	\texttt{ArslanA@qu.edu.qa} \\
}

\date{}

\renewcommand{\shorttitle}{Can the Quantum Fisher Effective Dimension Judge a Hybrid Quantum Neural Network?}

\hypersetup{
pdftitle={Can the Quantum Fisher Effective Dimension Judge a Hybrid Quantum Neural Network?},
pdfauthor={Md Masum Mia, Amith Khandakar, Mohamed Arselene Ayari},
pdfkeywords={effective dimension, hybrid quantum neural networks, quantum Fisher information, quantum machine learning, trainability, variational quantum circuits},
}

\begin{document}
\maketitle

\begin{abstract}
Hybrid quantum neural networks (QNNs) pair a classical feature extractor with a variational quantum circuit, and the effective dimension of the circuit's quantum Fisher information matrix (QFIM) is widely used to assess such models. This scalar is often interpreted as an indicator of trainability or accuracy, although these readings are rarely tested under controlled conditions. We therefore investigate what it actually measures. We define the effective dimension as the participation ratio of the QFIM diagonal, which lies between one and the parameter count, and we evaluate it across two image datasets, four frozen backbones including a random-projection control, four circuit depths on eight qubits, and three seeds. During training, the effective dimension remains well below the parameter count, and its correlation with test accuracy ranges from $-0.72$ to $+0.87$, changing sign across seeds in half of the configurations. Increasing depth activates a larger share of the circuit directions, from $0.38$ to $0.91$ of the parameter count, while the initialization gradient variance remains between $10^{-5}$ and $10^{-4}$, so no barren plateau (BP) appears at eight qubits. At a fixed circuit, the four backbones share an initial effective dimension of $6$ to $9$, even though CIFAR-10 accuracy ranges from $23.14\%$ to $86.44\%$. These results show that the effective dimension characterizes circuit geometry rather than model quality. We conclude with a practical protocol that assigns each diagnostic to the question it can answer.
\end{abstract}

\keywords{effective dimension \and hybrid quantum neural networks \and quantum Fisher information \and quantum machine learning \and trainability \and variational quantum circuits}

\section{Introduction}
\label{sec:introduction}
Hybrid QNNs attach a variational quantum circuit (VQC) to a classical feature extractor and train the composite model by gradient descent \cite{schuld2021,cerezo2021,schuld2020circuit,biamonte2017}. On images, the classical part compresses pixels into a short vector of rotation angles, and the circuit prepares a parameterized state on a few qubits. A linear head then maps the measured expectation values to class scores. This division keeps the circuit small enough to simulate exactly while still drawing on a modern visual representation. It also creates an attribution problem, because the accuracy of the composite model reflects the backbone, the bottleneck, the circuit, and the head together. A statistic computed from the circuit alone therefore need not describe the model as a whole.

Circuit parameters move a quantum state along a curved manifold, and the natural metric on that manifold is the QFIM $F$ \cite{meyer2021,liu2020,provost1980}. The QFIM records how strongly the prepared state responds to each change of the parameters, and its spectrum encodes how many independent directions the circuit can explore locally. A scalar summary of this spectrum, the effective dimension $\deff$, is widely used to reason about capacity, trainability, and generalization \cite{abbas2021,haug2021,larocca2023}. In practice, the same name also serves, often without a precise definition, as shorthand for model quality. A single number is easier to track than a $p\times p$ matrix, but it is useful only once the property it tracks is known.

\begin{figure}[!h]
    \centering
    \includegraphics[width=\textwidth]{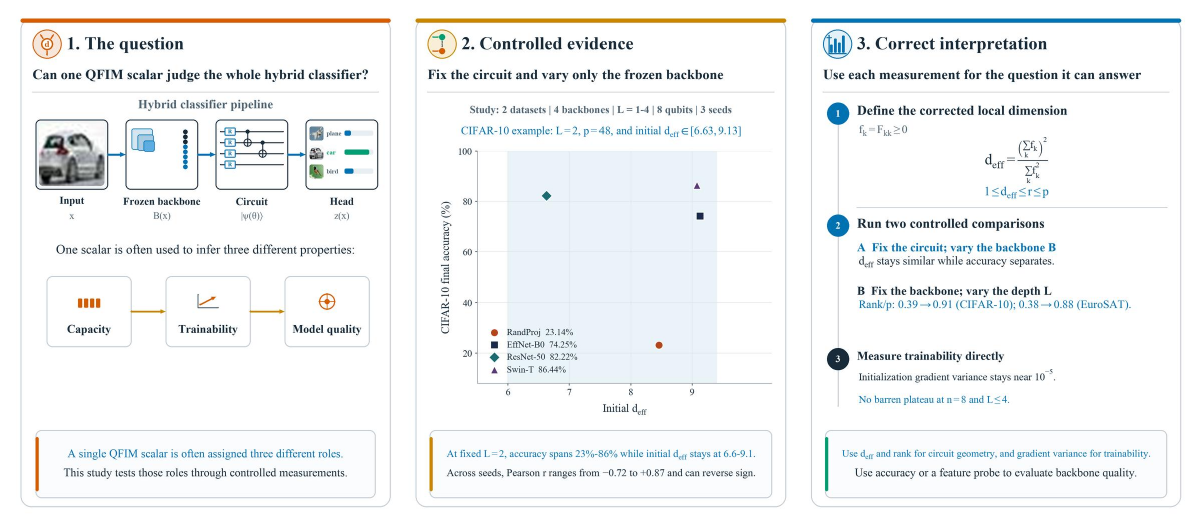}
    \caption{Overview of the measurement study, from the motivating question to the reading of each diagnostic.}
    \label{fig:overview}
\end{figure}

Three recurring practices make this reading of the scalar fragile.
\begin{itemize}
\item \textit{Inconsistent definitions.} The term effective dimension covers several distinct scalars, including a classical-Fisher capacity measure that is global in parameter space \cite{abbas2021,maddox2020}. A local dimension of the QFIM should lie between one and the parameter count, reaching that count only when every direction is equally active and equaling one when a single direction carries all the sensitivity. A scalar that violates these limits can still be plotted, but it cannot be read as a count of anything.
\item \textit{Single-seed trends.} Variational training is stochastic \cite{cerezo2021}, and a hybrid model adds a frozen backbone and a trained bottleneck to that stochasticity. A trend drawn from one seed can reverse under another, so the seed-to-seed spread must be reported alongside the mean.
\item \textit{Indirect plateau diagnoses.} A small gradient or a small $\deff$ is sometimes taken as evidence of a barren plateau (BP). A BP is defined by the decay of the gradient variance as the system grows \cite{mcclean2018,cerezo2021bp}, not by the value of a single scalar drawn from $F$.
\end{itemize}

This paper addresses these practices within one controlled measurement study. We define $\deff$ as the participation ratio of the QFIM diagonal, so the scalar lies in $[1,p]$ by construction and is bounded above by the number of active directions. We run every configuration with three seeds and report ranges rather than single trajectories. We also measure the gradient variance at initialization directly instead of inferring trainability from $\deff$. The guiding question is practical for hybrid image classifiers, namely what $\deff$ follows during training and which decisions it can support.

The experimental grid is designed so that each question has a clean comparison. We train on CIFAR-10 and EuroSAT with four frozen backbones, one of which is a random projection with no learned structure, and we sweep the circuit depth on a register of eight qubits. We hold the circuit fixed while swapping the injected features, and we hold the features fixed while changing only the circuit depth. Each comparison therefore attributes an observed change of $\deff$ to a single cause.

Fig.~\ref{fig:overview} summarizes the argument across its three panels. The first panel asks whether one QFIM scalar can judge a whole hybrid classifier, and it separates the three roles this scalar is commonly asked to play: capacity, trainability, and model quality. The second panel shows the controlled evidence at a fixed circuit, where the initial $\deff$ stays between $6.63$ and $9.13$ while CIFAR-10 accuracy spans $23.14\%$ to $86.44\%$. Across seeds, the correlation between $\deff$ and accuracy also ranges from $-0.72$ to $+0.87$. The third panel states the reading we argue for. It uses $\deff$ and the active-direction count to describe circuit geometry, and it assigns trainability to the gradient variance at initialization and backbone quality to accuracy or a feature probe.

The main contributions of this paper are the following.
\begin{enumerate}
\item We formalize $\deff$ as the participation ratio of the QFIM diagonal and state its range, its bound by the active-direction count, and its scale invariance. Because the active-direction count also upper-bounds the rank of $F$, the scalar is readable as a dimension and comparable across backbones and depths.
\item We characterize the training dynamics of $\deff$ over 14 trained configurations (42 runs) and quantify its relation to test accuracy across seeds. This establishes $\deff$ as a descriptive observable of the training state rather than a forecast of accuracy.
\item We separate capacity from trainability in a depth sweep by reporting the active-direction count, the participation ratio, and the gradient variance at initialization side by side.
\item We design a feature-scope experiment with a dimension-matched random-projection control that holds the circuit fixed and varies only the backbone. The design shows that circuit geometry and feature quality can be read apart.
\item We distill the measurements into a practical protocol that assigns each diagnostic to the question it can answer in a hybrid model.
\end{enumerate}

The remainder of this paper is organized as follows. Section~\ref{sec:related} reviews related work, and Section~\ref{sec:prelim} gives the background on hybrid models, VQCs, and the geometry of the QFIM. Section~\ref{sec:setup} formalizes the measurement framework and the properties of $\deff$, and Section~\ref{sec:design} describes the experimental protocol. Section~\ref{sec:results} reports quantitative results, qualitative results, and robustness analyses. Section~\ref{sec:discussion} interprets the findings and presents the practical protocol, Section~\ref{sec:future} outlines future research directions, and Section~\ref{sec:conclusion} concludes the paper.

\section{Related Work}
\label{sec:related}
This section reviews four lines of prior work. Section~\ref{sec:related-qfim} covers the QFIM as a geometric and optimization object, and Section~\ref{sec:related-ed} contrasts the notions of effective dimension used in classical and quantum learning. Section~\ref{sec:related-bp} reviews BPs and trainability, and Section~\ref{sec:related-hybrid} covers hybrid image classifiers built on frozen backbones.

\subsection{Quantum Fisher Information and Circuit Geometry}
\label{sec:related-qfim}
The QFIM is the central object of quantum information geometry. Meyer reviews its role in near-term applications and gives the pure-state expression adopted here \cite{meyer2021}, while Liu and colleagues treat the same matrix in the setting of multiparameter estimation \cite{liu2020}. Its geometric origin lies in the Riemannian structure that Provost and Vallee derived on the manifold of pure states \cite{provost1980}, so the QFIM behaves as a metric tensor rather than an arbitrary sensitivity score. Stokes and colleagues turned this metric into an optimizer, the quantum natural gradient (QNG), which preconditions the Euclidean gradient by $F$ \cite{stokes2020}. The classical natural gradient preconditions learning in the same way \cite{amari1998}, and general parameter-shift rules make the required derivatives accessible on hardware \cite{wierichs2022}.

A second line of work reads the metric as a capacity descriptor. Haug, Bharti, and Kim connect the rank of the quantum metric to the expressivity of parameterized circuits \cite{haug2021}, and Larocca and colleagues relate the same rank to overparameterization and to the ease of optimization \cite{larocca2023}. Together, these studies establish $F$ as an object worth measuring. They leave open what a scalar summary of $F$ reports when the circuit sits inside a hybrid image classifier, which is the question this paper takes up.

\subsection{Effective Dimension in Classical and Quantum Models}
\label{sec:related-ed}
Abbas and colleagues introduced an effective dimension built from the classical Fisher information and used it to compare the capacity of quantum and classical models \cite{abbas2021}. Their quantity integrates over a region of parameter space and depends on a prior and on the number of data points. Because it is designed to support a generalization bound, it acts as a global capacity measure. It is related in spirit to generalization analyses of quantum models trained on few data \cite{caro2022}, to information-theoretic studies of such models \cite{banchi2021}, and to the effective dimensionality of trained classical networks \cite{maddox2020}.

The quantity studied here differs in kind. It is local in the parameters and built from the quantum rather than the classical Fisher information, so it serves as a training observable rather than a global capacity score. The two notions share a name and a motivation but support different claims. We therefore do not import generalization statements from the global definition, and we keep our reading of $\deff$ tied to the local geometry that we measure.

\subsection{Barren Plateaus and Trainability}
\label{sec:related-bp}
A BP is the exponential decay of the variance of the cost gradient as the number of qubits, or in some regimes the depth, increases \cite{mcclean2018,cerezo2021bp}. On such a landscape, a randomly initialized circuit produces gradients that are indistinguishable from zero, so gradient descent cannot make measurable progress on the training cost. Known sources include high expressibility \cite{holmes2022}, hardware noise \cite{wang2021}, and entanglement, and the expressibility and entangling capability of common layered ansatzes have been characterized in detail \cite{sim2019}. Mitigations include architectures that provably avoid the effect \cite{pesah2021} and structured initialization \cite{grant2019}, and the effect has also been shown to persist for gradient-free optimizers \cite{arrasmith2021}. Across this literature, the defining quantity is a variance and its scaling with system size. For this reason, we measure the gradient variance directly rather than reading trainability from $\deff$.

\subsection{Hybrid Image Classifiers and Quantum Transfer Learning}
\label{sec:related-hybrid}
Circuit-centric classifiers encode classical features into a small circuit and train its parameters \cite{schuld2020circuit,schuld2021}, building on the view of quantum models as feature maps and kernels in Hilbert space \cite{havlicek2019,schuld2019}. On images, the qubit count is far smaller than the pixel count, so a classical frontend is almost always required. Frozen convolutional and transformer backbones pretrained on ImageNet \cite{deng2009} provide that frontend \cite{efficientnet,resnet,swin,dosovitskiy2021}, a design formalized as quantum transfer learning \cite{mari2020}.

This design carries a known risk. A strong backbone can mask a weak circuit, so high accuracy may reflect the features rather than the quantum component. The feature-scope experiment in this paper is built around that risk and includes a random projection matched in dimension to EfficientNet-B0 as a control with no learned structure. We do not propose a new circuit family or optimizer, and we do not argue that $\deff$ should replace accuracy in model selection. We instead contribute a calibrated reading of $\deff$ in hybrid image models. The scalar is a well-posed measure of local circuit geometry that evolves during training, relates to accuracy only weakly and inconsistently across seeds, and does not order backbones by feature quality.

\section{Background}
\label{sec:prelim}
This section introduces the three components on which the measurement study rests: the structure of hybrid models, the VQC used in every experiment, and the geometry of the QFIM that the effective dimension summarizes.

\subsection{Hybrid Quantum-Classical Models}
A hybrid quantum-classical classifier has three parts \cite{schuld2021,cerezo2021,mari2020}. A classical encoder maps raw data to a short set of circuit inputs, a VQC prepares a parameterized state that returns a small vector of expectation values, and a classical head maps those expectation values to class scores. In this paper, the encoder is a frozen visual backbone followed by a trained linear bottleneck, and the head is a trained linear layer. Only the bottleneck, the circuit parameters, and the head are updated during training.

The split serves both a practical and a scientific purpose. Practically, an eight-qubit circuit cannot ingest a raw $32\times 32$ or $64\times 64$ image, so the classical frontend is what makes the problem fit on so few qubits \cite{cerezo2022}. Scientifically, the split isolates one factor at a time. Swapping a frozen backbone changes the features while the circuit stays fixed, and changing only the depth changes the circuit while the features stay fixed. Any change in the spectrum of $F$ can then be attributed to a single cause, which makes hybrid models a clean setting for this question.

\subsection{Variational Quantum Circuits}
A VQC is a product of parameterized gates \cite{cerezo2021,schuld2020circuit,mitarai2018,benedetti2019}. We write the circuit parameters as $\theta \in \R^{p}$ and the prepared state as $|\psi(\theta)\rangle$. The circuit used here is a strongly entangling layered ansatz on $n=8$ qubits. Each layer applies a general rotation on every qubit followed by an entangling pattern, a structure whose expressibility and entangling behavior prior work has characterized \cite{sim2019}. With three rotation angles per qubit per layer, the parameter count is
\begin{equation}
\label{eq:p}
p = 3nL = 24L,
\end{equation}
so depths $L \in \{1,2,3,4\}$ give $p \in \{24,48,72,96\}$.

The circuit never sees the image directly, because it receives only the $n$ rotation angles produced by the bottleneck. The effect of this compact angle encoding on expressive power remains an active topic \cite{schuld2021encoding,perezsalinas2020}. After the circuit, we measure the Pauli-$Z$ operator on each qubit and pass the $n$ expectation values to the head. As $\theta$ varies, $|\psi(\theta)\rangle$ traces a manifold in the unit sphere of the Hilbert space, and training moves $\theta$ along that manifold. The local geometry of each move is exactly what $F$ records. Keeping the qubit count small, with depth as the only circuit knob, makes a careful study of that geometry affordable.

\subsection{Quantum Fisher Information and Its Geometry}
For a pure parameterized state, the geometry of the manifold is captured by the quantum geometric tensor
\begin{equation}
\label{eq:qgt}
G_{ij}(\theta) = \langle \partial_i \psi | \partial_j \psi \rangle
- \langle \partial_i \psi | \psi \rangle \langle \psi | \partial_j \psi \rangle,
\end{equation}
where $\partial_i$ denotes the derivative with respect to $\theta_i$. The second term removes the component of the derivative along the state itself, which is a pure phase and carries no physical information \cite{provost1980}. The real part of $G$ is the Fubini-Study metric, and the QFIM is four times this metric,
\begin{equation}
\label{eq:qfim}
F_{ij}(\theta) = 4\,\mathrm{Re}\,G_{ij}(\theta).
\end{equation}
Because $F$ is the Gram matrix of tangent vectors projected orthogonally to $|\psi\rangle$, it is real, symmetric, and positive semidefinite \cite{meyer2021,liu2020}.

For a direction $v$ in parameter space, $v^{\top} F v$ is proportional to the squared speed at which the state moves when $\theta$ is shifted along $v$. A zero eigenvalue therefore marks a redundant direction, and the rank of $F$ counts locally independent directions. This is why the rank of the quantum metric has been tied to expressivity, capacity, and overparameterization \cite{haug2021,larocca2023}. The same matrix also serves as the preconditioner of the QNG \cite{stokes2020,amari1998}, whereas in this paper we use $F$ as a measurement rather than as an optimizer.

Forming the full $p\times p$ matrix at every evaluation epoch is costly, so we track the diagonal $f_k = F_{kk}$. This diagonal can be obtained from parameter-shift evaluations of the circuit \cite{wierichs2022}, and it has a direct physical meaning. Consider a parameter $\theta_k$ that enters through a rotation $\exp(-i\theta_k P_k/2)$ generated by a Pauli operator $P_k$, and let $|\psi_k\rangle$ be the state prepared up to that gate. The derivative with respect to $\theta_k$ acts locally as multiplication by $-iP_k/2$, and substituting into Eq.~\eqref{eq:qgt} and Eq.~\eqref{eq:qfim} gives
\begin{equation}
\label{eq:diag-var}
F_{kk} = 4\,\mathrm{Var}_{\psi_k}\!\left(\tfrac{P_k}{2}\right) = \langle \psi_k | P_k^2 | \psi_k \rangle - \langle \psi_k | P_k | \psi_k \rangle^2,
\end{equation}
the variance of the generator in the current state. Because $P_k^2 = I$, the diagonal entry reduces to
\begin{equation}
\label{eq:diag-bound}
F_{kk} = 1 - \langle \psi_k | P_k | \psi_k \rangle^2 \in [0,1].
\end{equation}
Every diagonal entry is therefore nonnegative, and the trace satisfies $\Tr F \le p$. A direction is idle exactly when its generator has a definite value in the current state, since the variance of the generator then drops to zero. Nonnegativity and the trace bound are the two facts that make the participation ratio of Section~\ref{sec:setup} a valid dimension.

\section{Measurement Framework}
\label{sec:setup}
This section formalizes the hybrid model, the logged geometric quantities, and their properties. It closes with the logic that links each property to a measurement in Section~\ref{sec:results}.

\subsection{Hybrid Model}
Each model maps an image $x$ to class scores $z(x)$ through
\begin{equation}
\label{eq:model}
z(x) = W_{\mathrm{out}}\, o(\theta, a(x)) + c_{\mathrm{out}},
\qquad a(x) = W_{\mathrm{in}} B(x) + c_{\mathrm{in}}.
\end{equation}
Here $B(x) \in \R^{d}$ is the frozen backbone feature vector, and $W_{\mathrm{in}}$ and $c_{\mathrm{in}}$ form the linear bottleneck. The vector $a(x) \in \R^{n}$ holds the rotation angles that enter the circuit. The circuit prepares $|\psi(\theta, a)\rangle$, and the vector $o(\theta, a) \in \R^{n}$ collects the Pauli-$Z$ expectation values
\begin{equation}
\label{eq:expect}
o_j(\theta, a) = \langle \psi(\theta, a) | Z_j | \psi(\theta, a) \rangle, \qquad j=1,\dots,n.
\end{equation}
The pair $W_{\mathrm{out}}$ and $c_{\mathrm{out}}$ forms the linear head that maps expectations to class scores.

The trainable parameters are $W_{\mathrm{in}}$, $c_{\mathrm{in}}$, $\theta$, $W_{\mathrm{out}}$, and $c_{\mathrm{out}}$, while the backbone $B$ is never updated and acts as a fixed feature map. Test accuracy is therefore a property of the whole composition rather than of any single part. The QFIM, in contrast, is computed from the circuit state $|\psi(\theta, a)\rangle$ by differentiating with respect to $\theta$, so it depends on the backbone only through the $n$ angles that pass the bottleneck. The feature-scope experiment exploits this asymmetry between what accuracy depends on and what $F$ depends on. At every checkpoint, $F$ is evaluated on the encoded inputs under the same protocol for every configuration, and we write $F(\theta)$ for brevity.

\subsection{Diagonal Estimate of the QFIM}
We evaluate Eq.~\eqref{eq:qfim} at the current circuit parameters and keep the diagonal estimate $f \in \R^{p}$,
\begin{equation}
\label{eq:diag}
f_k = F_{kk}(\theta) \ge 0, \qquad k = 1,\dots,p.
\end{equation}
By Eq.~\eqref{eq:diag-var} and Eq.~\eqref{eq:diag-bound}, each $f_k$ is the variance of a circuit generator and lies in $[0,1]$. A value near zero marks a locally idle parameter, and many comparable values mark many locally active parameters. All logged geometric scalars are functions of $f$ alone, which keeps the cost of geometry logging low enough to run at 31 checkpoints in each of the 42 runs.

\subsection{Effective Dimension and Active-Direction Count}
We define the effective dimension as the participation ratio of the diagonal,
\begin{equation}
\label{eq:deff}
\deff = \frac{\left(\sum_{k=1}^{p} f_k\right)^2}{\sum_{k=1}^{p} f_k^2}
= \frac{(\Tr F)^2}{\sum_{k=1}^{p} F_{kk}^2}.
\end{equation}
When the off-diagonal couplings of $F$ vanish, the denominator equals $\Tr(F^2)$ and $\deff$ coincides with the spectral participation ratio of $F$. In general, Eq.~\eqref{eq:deff} is the participation ratio of the diagonal, which is the quantity we log. We also report the active-direction count
\begin{equation}
\label{eq:rank}
r = \#\{\, k : f_k > \varepsilon \,\},
\end{equation}
which counts the directions that are active above a small threshold $\varepsilon>0$. In the statements below, we take $\varepsilon = 0$, so $r$ counts strictly positive entries. With the small threshold used in practice, the same bounds hold up to entries below $\varepsilon$. The proofs are short and are included so that the argument remains self-contained.

\begin{lemma}[Range and extremes]
\label{lem:range}
Let $f \in \R^{p}$ have nonnegative entries that are not all zero. Then $1 \le \deff \le p$. The value $\deff = p$ holds if and only if all entries are equal and positive, and the value $\deff = 1$ holds if and only if exactly one entry is nonzero.
\end{lemma}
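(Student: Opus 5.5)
The plan is to handle the two inequalities separately, using the standard Cauchy--Schwarz argument for the upper bound and a direct expansion of the square for the lower bound, and then to read off the equality cases from each. Write $S_1 = \sum_{k} f_k > 0$ (positive because the entries are nonnegative and not all zero) and $S_2 = \sum_k f_k^2 > 0$. Then $\deff = S_1^2/S_2$ is well defined.

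For the upper bound, I will apply Cauchy--Schwarz to the vectors $f$ and $\mathbf{1} = (1,\dots,1)\in\R^{p}$. This gives $S_1^2 = \langle f,\mathbf{1}\rangle^2 \le \|f\|^2\,\|\mathbf{1}\|^2 = p\,S_2$, hence $\deff \le p$. Equality in Cauchy--Schwarz holds if and only if $f$ is a scalar multiple of $\mathbf{1}$, that is, all entries are equal. Since they are not all zero, the common value is positive. Conversely, if $f = c\mathbf{1}$ with $c>0$, then $\deff = p^2c^2/(pc^2) = p$.

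For the lower bound, I will expand $S_1^2 = \sum_k f_k^2 + \sum_{j\ne k} f_j f_k = S_2 + \sum_{j\ne k} f_jf_k$. By nonnegativity the cross sum is $\ge 0$, so $S_1^2 \ge S_2$ and $\deff \ge 1$. Equality holds if and only if every product $f_jf_k$ with $j\ne k$ vanishes. That means at most one entry is nonzero, and since not all are zero, exactly one is. Conversely, a single nonzero entry $f_{k_0}$ gives $\deff = f_{k_0}^2/f_{k_0}^2 = 1$.

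I do not expect a genuine obstacle. The only point needing care is the role of the hypotheses. Nonnegativity is essential for the lower bound: with mixed signs, cross terms could make $S_1^2 < S_2$. This is exactly where Eq.~\eqref{eq:diag} and Eq.~\eqref{eq:diag-bound} from the background enter. The not-all-zero hypothesis is needed both so that $S_2>0$ makes $\deff$ well defined and to upgrade ``at most one nonzero entry'' to ``exactly one'' and ``all equal'' to ``all equal and positive'' in the equality cases.
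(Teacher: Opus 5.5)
Your proof is correct and essentially matches the paper's. The lower bound and its equality case are argued the same way, by expanding the square. For the upper bound, the paper goes through Lemma~\ref{lem:rankbound}: it applies Cauchy--Schwarz on the support $S$ to get $\deff \le r \le p$, so $\deff = p$ forces $r = p$ with all entries equal. You instead apply the same Cauchy--Schwarz inequality directly against the all-ones vector in $\R^{p}$, which is self-contained and avoids the paper's forward reference.
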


\begin{proof}
Because every $f_k \ge 0$,
\[
\Big(\sum_k f_k\Big)^2 = \sum_k f_k^2 + 2\sum_{i<j} f_i f_j \ge \sum_k f_k^2 .
\]
Dividing by $\sum_k f_k^2 > 0$ gives $\deff \ge 1$, with equality if and only if all cross terms vanish, which means that exactly one entry is nonzero. The upper bound and its equality case follow from Lemma~\ref{lem:rankbound}, since $r \le p$.
\end{proof}

\begin{lemma}[Bound by the active-direction count]
\label{lem:rankbound}
With $f$ as above, $\deff \le r$, with equality if and only if all nonzero entries are equal.
\end{lemma}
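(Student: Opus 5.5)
The plan is to apply Cauchy--Schwarz on the support of $f$. Let $S = \{k : f_k > 0\}$, so that $|S| = r$ with the convention $\varepsilon = 0$ fixed above. Since $f$ is nonnegative and not all zero, $S$ is nonempty and $\sum_k f_k^2 > 0$, so $\deff$ in Eq.~\eqref{eq:deff} is well defined. The zero entries contribute to neither the numerator nor the denominator. I would therefore first rewrite
\[
\deff = \frac{\left(\sum_{k\in S} f_k\right)^2}{\sum_{k\in S} f_k^2},
\]
which reduces the claim to a statement about the $r$ strictly positive entries alone.

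Next I would apply the Cauchy--Schwarz inequality to the vectors $(f_k)_{k\in S}$ and $(1)_{k\in S}$ in $\R^{r}$:
\[
\Big(\sum_{k\in S} f_k \cdot 1\Big)^2 \le \Big(\sum_{k\in S} f_k^2\Big)\Big(\sum_{k\in S} 1\Big) = r \sum_{k\in S} f_k^2 .
\]
Dividing by the positive quantity $\sum_{k\in S} f_k^2$ gives $\deff \le r$.

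For the equality case, recall that equality in Cauchy--Schwarz holds if and only if the two vectors are linearly dependent. Because the all-ones vector is nonzero, this means $(f_k)_{k\in S} = c\,(1,\dots,1)$ for some scalar $c$, and $c > 0$ since the entries on $S$ are positive. So equality holds exactly when all nonzero entries are equal. The converse is a direct check: if every nonzero entry equals $c$, then $\deff = (rc)^2/(rc^2) = r$.

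An equivalent route, if one prefers to avoid Cauchy--Schwarz, is to expand $\sum_{k\in S}(f_k - \bar f)^2 \ge 0$ with $\bar f$ the mean over $S$. This gives the same bound, and the equality case becomes vanishing variance.

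The only delicate point is bookkeeping rather than analysis. One must restrict to the support so that the count in the Cauchy--Schwarz factor is $r$ and not $p$, and one must confirm that the equality condition refers only to the nonzero entries. This is also what lets Lemma~\ref{lem:range} use the present result. Since $r \le p$, we get $\deff \le p$, and $\deff = p$ forces both $r = p$ and equal nonzero entries, that is, all entries equal and positive.

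I would also remark that, for a positive threshold $\varepsilon$, the same argument applied to the entries above $\varepsilon$ gives the approximate version mentioned in the text.
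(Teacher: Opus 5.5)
Your proposal is correct and follows the paper's own argument: restrict to the support $S$ with $|S|=r$, apply Cauchy--Schwarz to $(f_k)_{k\in S}$ and the all-ones vector on $S$, divide by $\sum_{k\in S} f_k^2>0$, and read off the equality case. Your handling of equality via linear dependence (with $c>0$), together with the explicit converse check, is slightly more careful than the paper's one-line statement.
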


\begin{proof}
Let $S$ be the set of indices with $f_k > 0$, so $|S| = r$. The Cauchy-Schwarz inequality applied to $(f_k)_{k \in S}$ and the all-ones vector on $S$ gives
\[
\Big(\sum_{k \in S} f_k\Big)^2 \le r \sum_{k \in S} f_k^2 .
\]
Dividing by $\sum_{k} f_k^2 = \sum_{k \in S} f_k^2$ gives $\deff \le r$, and equality in Cauchy-Schwarz holds if and only if the entries indexed by $S$ are equal.
\end{proof}

\begin{lemma}[Active-direction count and rank]
\label{lem:rankF}
The rank of $F$ is at most $r$.
\end{lemma}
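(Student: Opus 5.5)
The plan is to use that $F$ is positive semidefinite, as noted after Eq.~\eqref{eq:qfim}. For a positive semidefinite matrix, a zero diagonal entry forces the whole corresponding row and column to vanish. Once that is established, $F$ is supported on the index set $S=\{k: f_k>0\}$ of size $r$, and its rank cannot exceed $r$.

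\emph{Key step: zero diagonal kills its row and column.} Suppose $F_{kk}=0$ and fix any index $j$. The $2\times 2$ principal submatrix on indices $\{k,j\}$ is again positive semidefinite, so its determinant is nonnegative:
\[
F_{kk}F_{jj}-F_{kj}^2 \ge 0 .
\]
With $F_{kk}=0$ this forces $F_{kj}=0$, and $F_{jk}=0$ follows by symmetry. As a cross-check, the Gram-matrix picture gives the same conclusion. Write $F_{ij}=4\,\mathrm{Re}\langle u_i|u_j\rangle$, where $u_i=\partial_i\psi-\langle\psi|\partial_i\psi\rangle\psi$ is the projected tangent vector. Then $F_{kk}=4\|u_k\|^2=0$ gives $u_k=0$, so every entry in row $k$ vanishes.

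\emph{Counting.} After the key step, every row of $F$ indexed outside $S$ is zero. The row space is therefore spanned by at most $|S|=r$ nonzero rows, which gives $\operatorname{rank}F\le r$. Equivalently, after a permutation of indices $F$ is block diagonal with an $r\times r$ block and a zero block.

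\emph{Main obstacle.} The only point needing care is the link between the logged $f$ and the true diagonal of $F$. Under the convention $\varepsilon=0$, this holds by the definition in Eq.~\eqref{eq:diag}. With a positive threshold the bound holds only approximately, as the paper already remarks before Lemma~\ref{lem:range}. For entries below $\varepsilon$, I would use the same inequality $|F_{kj}|\le\sqrt{F_{kk}F_{jj}}\le\sqrt{\varepsilon}$. This bounds how much those rows can contribute, but it gives no exact rank statement, so I would not attempt one.
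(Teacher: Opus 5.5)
Your proof is correct and matches the paper's argument: the nonnegative $2\times 2$ principal minors of the positive semidefinite $F$ give $F_{kj}^2 \le F_{kk}F_{jj}$, so every row with $f_k=0$ vanishes and the rank is at most $|S|=r$. Your Gram-matrix cross-check and your remark on a positive threshold $\varepsilon$ are not needed for the lemma, but both are sound.
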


\begin{proof}
Because $F$ is positive semidefinite, each $2\times 2$ principal minor is nonnegative, so $F_{kj}^2 \le F_{kk}F_{jj}$ for all $k$ and $j$. If $f_k = 0$, every entry of row $k$ vanishes. The nonzero rows of $F$ are therefore indexed by the set $S$ of Lemma~\ref{lem:rankbound}, and the rank of $F$ is at most $|S| = r$.
\end{proof}

\begin{proposition}[Scale invariance]
\label{prop:scale}
For any constant $\alpha > 0$, $\deff(\alpha f) = \deff(f)$.
\end{proposition}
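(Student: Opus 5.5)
The plan is to substitute $\alpha f$ directly into the definition in Eq.~\eqref{eq:deff} and observe that numerator and denominator are both homogeneous of degree two in $f$, so the factor $\alpha$ cancels. No earlier lemma is needed beyond the standing assumption that $f$ has nonnegative entries that are not all zero, which guarantees that $\deff(f)$ is defined.

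The first step is to check well-definedness for the scaled vector. Since $\alpha>0$ and $f$ has nonnegative entries that are not all zero, the vector $\alpha f$ also has nonnegative entries that are not all zero. Hence $\sum_k (\alpha f_k)^2 = \alpha^2 \sum_k f_k^2 > 0$, so $\deff(\alpha f)$ is defined. The second step is the computation itself:
\[
\deff(\alpha f) = \frac{\bigl(\sum_{k} \alpha f_k\bigr)^2}{\sum_{k} \alpha^2 f_k^2} = \frac{\alpha^2 \bigl(\sum_k f_k\bigr)^2}{\alpha^2 \sum_k f_k^2} = \deff(f).
\]
The only manipulation is pulling the constant out of each sum, followed by cancelling $\alpha^2 \neq 0$.

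I would add a brief remark that the active-direction count $r$ of Eq.~\eqref{eq:rank} is also unchanged under scaling when $\varepsilon=0$, because $\alpha f_k>0$ exactly when $f_k>0$. The bounds of Lemma~\ref{lem:range} and Lemma~\ref{lem:rankbound} are therefore consistent with the invariance. This remark also explains why the conventional factor $4$ in Eq.~\eqref{eq:qfim}, or any other normalization of $F$, does not affect $\deff$. There is no genuine obstacle here. The only point requiring care is that the argument needs $\alpha>0$, or at least $\alpha\neq 0$, and a nonzero $f$, so that the denominator does not vanish.
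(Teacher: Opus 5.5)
Your proof is correct and is the same as the paper's: both the numerator and the denominator of Eq.~\eqref{eq:deff} pick up a factor $\alpha^2$, which cancels. Your explicit check that the denominator stays positive for $\alpha f$ makes precise a point the paper leaves implicit.
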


\begin{proof}
Replacing $f_k$ by $\alpha f_k$ multiplies both the numerator and the denominator of Eq.~\eqref{eq:deff} by $\alpha^2$, so the ratio is unchanged.
\end{proof}

These statements fix the meaning of the logged scalars before any data are collected. Lemma~\ref{lem:range} shows that $\deff$ counts active and comparably weighted directions and always lies in $[1,p]$, so the scalar behaves like a dimension of the circuit. Lemma~\ref{lem:rankbound} shows that it never exceeds the number of active entries, so $\deff$ and $r$ are best read as a pair. The count $r$ asks how many directions are on, and $\deff$ asks how evenly those directions share the sensitivity. Lemma~\ref{lem:rankF} links $r$ to the matrix itself, since no more than $r$ independent directions can exist. Finally, Proposition~\ref{prop:scale} matters because the overall magnitude of $F$ can differ across backbones and depths. Scale invariance ensures that $\deff$ reports the shape of the diagonal rather than its size, which allows a fair comparison across backbones.

\subsection{Logic of the Study}
The measurements follow from the asymmetry built into the hybrid model of Eq.~\eqref{eq:model}. If $\deff$ predicted model quality, changing the features would move it together with accuracy. If $\deff$ measured circuit capacity, changing the depth at fixed features would move it with depth. We therefore organize the study around three questions. A positive answer to Q3 would allow $\deff$ to select a backbone, whereas a negative answer would confirm it as a circuit-geometry observable.
\begin{itemize}
\item Q1: How does $\deff$ evolve during training, and does its time course track test accuracy consistently across seeds?
\item Q2: How do $\deff$ and $r$ scale with circuit depth, and does the gradient variance at initialization collapse? We also ask whether depth improves accuracy, and we keep that question separate from the first two.
\item Q3: At a fixed circuit, does $\deff$ reflect the quality of the injected features?
\end{itemize}

\section{Experimental Protocol}
\label{sec:design}
The grid is built so that each question has a clean comparison. Fig.~\ref{fig:method} shows the procedure, and Table~\ref{tab:grid} lists the factors. All tabulated aggregates are computed from the completed-run summaries of the 14 trained configurations (42 runs).

\begin{figure}[htbp]
\centering
\includegraphics[width=0.5\linewidth]{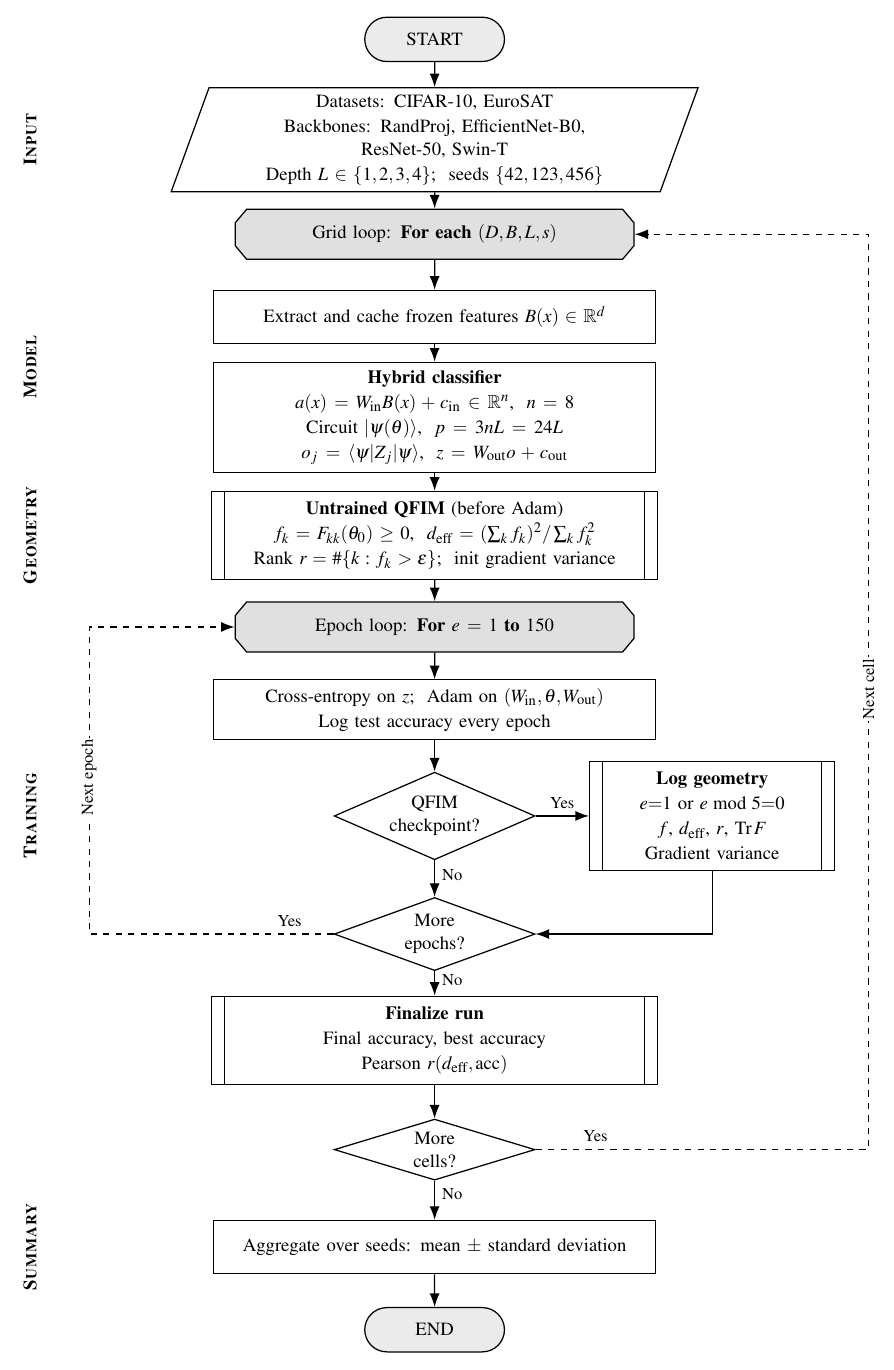}
\caption{Experimental procedure of the measurement study.}
\label{fig:method}
\end{figure}

The procedure runs over two sub-grids, a backbone sweep at $L=2$ and a depth sweep on EfficientNet-B0, which share the configuration with EfficientNet-B0 at $L=2$. Each sub-grid is run on both datasets, which gives $2\times(4+3)=14$ configurations, and each configuration is run with three seeds, which gives 42 runs. In the input stage, the datasets, backbones, depths, and seeds of these sub-grids are fixed, so the full grid is defined before any training begins.

In the model stage, the frozen features $B(x)$ are extracted once and cached, and the angle vector $a(x) = W_{\mathrm{in}}B(x) + c_{\mathrm{in}}$ with $n=8$ entries is formed from them. The state is then prepared on a depth-$L$ circuit with $p=24L$ parameters, and the expectation values $o_j$ are mapped to scores $z = W_{\mathrm{out}}o + c_{\mathrm{out}}$. In the geometry stage, evaluated on the untrained circuit before the first optimizer step, we record $f$, $\deff$, $r$, and the gradient variance at initialization. The training stage then runs 150 epochs of cross-entropy minimization and logs test accuracy at every epoch. It also writes a geometry checkpoint after the first epoch and every fifth epoch thereafter. When a run ends, we record the final and best accuracy together with the per-run Pearson correlation between the $\deff$ and accuracy time series. After all runs finish, we aggregate these records over seeds.

\subsection{Datasets}
We use CIFAR-10 \cite{cifar} and EuroSAT \cite{eurosat} as the two image collections in the grid. CIFAR-10 contains ten classes of natural object images, and EuroSAT contains ten classes of land-cover satellite images. The shared class count keeps the head size identical, while the difference in visual structure provides a built-in replication across two kinds of images. A finding that holds on both datasets is reported as consistent, and a finding that holds on only one is reported as dataset dependent.

\subsection{Backbones}
We use four feature maps that differ only in the vector $B(x)$ passed to the bottleneck. RandProj is a fixed random linear projection with no learned meaning, and it serves as the control that asks what the circuit does with unstructured features. EfficientNet-B0 is a compact convolutional network \cite{efficientnet} with feature dimension 1280, which RandProj matches exactly. ResNet-50 is a residual convolutional network with feature dimension 2048 \cite{resnet}, and Swin-T is a hierarchical vision transformer with feature dimension 768 \cite{swin,dosovitskiy2021}. The three learned backbones are pretrained on ImageNet \cite{deng2009} and then frozen, and RandProj is frozen as well, so no backbone is fine-tuned. The dimension match between RandProj and EfficientNet-B0 removes feature dimension as a confounder in the sharpest comparison of the study.

\subsection{Circuit and Depth}
The circuit is the strongly entangling layered ansatz on $n=8$ qubits \cite{schuld2020circuit} with $p=24L$ parameters from Eq.~\eqref{eq:p}. We sweep the four depths on the EfficientNet-B0 backbone alone, since depth is a property of the circuit and one backbone is enough for the sweep. We compare backbones at the fixed depth $L=2$, so $p=48$ throughout that comparison and only $B$ changes. Fixing the circuit for the backbone comparison and the backbone for the depth comparison lets each measurement attribute its effect to a single cause.

\subsection{Training Protocol}
Each configuration is trained for 150 epochs with Adam \cite{adam} under seeds 42, 123, and 456. All runs are noise-free statevector simulations in PyTorch \cite{pytorch} with automatic differentiation of the hybrid model \cite{bergholm2018}. Cached features ensure that the same $B(x)$ is reused across depths and seeds of a given backbone, which removes feature extraction as a source of run-to-run variation. We report both the final test accuracy at the end of the fixed budget and the best test accuracy along the trajectory. The final value measures where a fixed budget lands, while the best value measures the peak that the trajectory reached.

\subsection{Logged Metrics and Statistical Reporting}
At each geometry checkpoint, we log $f$, $\deff$ from Eq.~\eqref{eq:deff}, $r$ from Eq.~\eqref{eq:rank}, the trace $\sum_k f_k$, the gradient variance, the mean absolute gradient, the training loss, and the test accuracy, which gives 31 checkpoints per run. The stabilization epoch is the first checkpoint at which the last five values of $\deff$ have a relative standard deviation below $0.10$. For each run, we compute the Pearson correlation between the $\deff$ and accuracy time series, and we report its mean across seeds together with the min-to-max range. A range that crosses zero indicates that the sign is not stable. Backbone gaps in accuracy are tested with paired $t$-tests and Wilcoxon signed-rank tests across seeds. Unless stated otherwise, every number is a mean over three seeds, and the uncertainty is the sample standard deviation.

\begin{table}[htbp]
\centering
\caption{Experimental grid of 14 trained configurations (42 runs), formed by a backbone sweep at $L=2$ and a depth sweep on EfficientNet-B0, with three seeds per configuration.}
\label{tab:grid}
\resizebox{0.5\columnwidth}{!}{%
\begin{tabular}{ll}
\toprule
\bfseries\boldmath Factor & \bfseries\boldmath Values \\
\midrule
Datasets & CIFAR-10, EuroSAT \\
Backbones & RandProj, EfficientNet-B0, ResNet-50, Swin-T \\
Qubits $n$ & 8 \\
Depths $L$ & 1, 2, 3, 4 (depth sweep on EfficientNet-B0 only) \\
Parameters $p$ & 24, 48, 72, 96 \\
Seeds & 42, 123, 456 \\
Epochs & 150, Adam optimizer \\
Simulation & noise-free statevector \\
\bottomrule
\end{tabular}}
\end{table}

\section{Results}
\label{sec:results}
This section reports the measurements in three parts. Section~\ref{sec:results-quant} presents quantitative results for Q1, Q2, and Q3, and Section~\ref{sec:results-qual} presents qualitative results on how geometry and accuracy come apart. Section~\ref{sec:results-robust} analyzes the robustness of the findings to the choice of scalar, the dataset, and the seed.

\subsection{Quantitative Results}
\label{sec:results-quant}

\subsubsection{Q1 Part 1: Training Dynamics of the Effective Dimension}
Table~\ref{tab:rq1} reports the trajectory of $\deff$ for all 14 configurations, and three regularities hold across the grid. First, untrained circuits already carry a moderate effective dimension well below $p$, which indicates that random angles spread the sensitivity unevenly across parameters from the start. Second, training moves $\deff$ but never drives it toward the parameter count, because gradient descent strengthens the directions that reduce the loss and leaves the remaining directions weak. The RandProj control shows the largest excursion on CIFAR-10, which suggests that the circuit keeps searching when unstructured features provide no consistent gradient direction. On EuroSAT, the peak falls below the initial value in four of seven configurations and exceeds it by at most $2.8$ in the other three, so training there mainly concentrates the existing sensitivity.

Third, the direction of change depends on the circuit depth. A single layer contracts the dimension on both datasets, since its few parameters must align with the strongest class-separating angles and the sensitivity collapses onto very few directions. Three layers expand the dimension instead, which is consistent with the extra layers giving the optimizer room to recruit new directions. At four layers, the peak falls below the initial value, because the deeper circuit starts with more active directions than training needs. Since $\deff$ stays far below $p$ in every configuration, the diagonal stays far from the flat case of Lemma~\ref{lem:range}. Training therefore activates a subset of directions and holds them rather than producing an isotropic $p$-dimensional object.

The stabilization epoch separates structured from unstructured features. For the structured features, it lies between 20 and 47, so the geometry settles within roughly the first third of training. RandProj is the exception, with stabilization at epoch 117 on CIFAR-10 and epoch 53 on EuroSAT, and its long motion on CIFAR-10 matches the large peak of this control.

\begin{table}[!h]
\centering
\caption{Effective-dimension dynamics and the per-run Pearson correlation with test accuracy, as a mean over three seeds.}
\label{tab:rq1}
\resizebox{\linewidth}{!}{%
\begin{tabular}{llccccccc}
\toprule
\bfseries\boldmath Dataset & \bfseries\boldmath Backbone & \bfseries\boldmath $L$ & \bfseries\boldmath Init $\deff$ & \bfseries\boldmath Peak $\deff$ & \bfseries\boldmath Mean $\deff$ & \bfseries\boldmath Late $\deff$ & \bfseries\boldmath Stab.\ ep. & \bfseries\boldmath Pearson $r$ (range) \\
\midrule
\multirow{7}{*}{CIFAR-10}
 & RandProj & 2 & 8.46 $\pm$ 0.20 & 18.72 $\pm$ 3.03 & 11.05 $\pm$ 2.04 & 11.42 $\pm$ 2.69 & 117 & $+0.12$ [$-0.45$, $+0.42$] \\
 & EfficientNet-B0 & 2 & 9.13 $\pm$ 0.68 & 10.09 $\pm$ 2.19 & 6.97 $\pm$ 1.46 & 8.94 $\pm$ 2.55 & 43 & $+0.62$ [$+0.38$, $+0.87$] \\
 & ResNet-50 & 2 & 6.63 $\pm$ 1.17 & 6.21 $\pm$ 0.42 & 4.49 $\pm$ 0.97 & 5.34 $\pm$ 0.77 & 45 & $+0.40$ [$-0.33$, $+0.82$] \\
 & Swin-T & 2 & 9.08 $\pm$ 1.29 & 9.78 $\pm$ 2.15 & 6.77 $\pm$ 1.22 & 7.19 $\pm$ 2.64 & 25 & $+0.13$ [$-0.53$, $+0.54$] \\
 & EfficientNet-B0 & 1 & 5.76 $\pm$ 0.30 & 2.17 $\pm$ 0.27 & 1.63 $\pm$ 0.26 & 1.77 $\pm$ 0.50 & 20 & $+0.10$ [$-0.24$, $+0.45$] \\
 & EfficientNet-B0 & 3 & 11.55 $\pm$ 0.30 & 16.30 $\pm$ 1.67 & 10.64 $\pm$ 1.94 & 12.96 $\pm$ 0.48 & 32 & $+0.33$ [$+0.14$, $+0.51$] \\
 & EfficientNet-B0 & 4 & 15.23 $\pm$ 2.28 & 14.64 $\pm$ 0.89 & 9.82 $\pm$ 0.47 & 8.13 $\pm$ 2.34 & 30 & $+0.20$ [$+0.15$, $+0.27$] \\
\midrule
\multirow{7}{*}{EuroSAT}
 & RandProj & 2 & 8.03 $\pm$ 1.08 & 10.80 $\pm$ 3.13 & 6.04 $\pm$ 0.92 & 6.89 $\pm$ 1.55 & 53 & $+0.09$ [$-0.15$, $+0.50$] \\
 & EfficientNet-B0 & 2 & 9.28 $\pm$ 0.23 & 9.79 $\pm$ 1.07 & 6.14 $\pm$ 1.45 & 5.46 $\pm$ 3.72 & 42 & $+0.13$ [$-0.41$, $+0.74$] \\
 & ResNet-50 & 2 & 6.18 $\pm$ 0.52 & 5.68 $\pm$ 1.98 & 3.53 $\pm$ 1.05 & 3.92 $\pm$ 1.87 & 38 & $+0.27$ [$+0.17$, $+0.45$] \\
 & Swin-T & 2 & 8.90 $\pm$ 1.51 & 5.66 $\pm$ 1.10 & 4.23 $\pm$ 0.65 & 4.79 $\pm$ 0.77 & 47 & $+0.45$ [$+0.30$, $+0.76$] \\
 & EfficientNet-B0 & 1 & 6.03 $\pm$ 0.71 & 3.71 $\pm$ 0.29 & 2.79 $\pm$ 0.11 & 2.56 $\pm$ 0.17 & 42 & $-0.27$ [$-0.72$, $+0.20$] \\
 & EfficientNet-B0 & 3 & 12.65 $\pm$ 1.02 & 14.81 $\pm$ 1.58 & 11.21 $\pm$ 1.54 & 11.54 $\pm$ 2.41 & 32 & $+0.61$ [$+0.50$, $+0.70$] \\
 & EfficientNet-B0 & 4 & 15.30 $\pm$ 1.63 & 11.46 $\pm$ 1.13 & 8.85 $\pm$ 1.17 & 8.52 $\pm$ 1.32 & 32 & $+0.50$ [$+0.31$, $+0.62$] \\
\bottomrule
\end{tabular}
}
\end{table}

\subsubsection{Q1 Part 2: Relation Between Effective Dimension and Accuracy}
Table~\ref{tab:corr} summarizes the per-run correlations of Table~\ref{tab:rq1}. The mean correlation is positive in almost every configuration, which shows that $\deff$ and accuracy share a weak tendency to rise together. Much of this shared tendency likely comes from the common early phase of training, in which both quantities change fastest. The tendency does not survive at the level of individual runs, because about half of the configurations contain seeds with opposite signs. The spread is widest for the shallow and backbone-comparison configurations, where the geometry either collapses quickly or keeps fluctuating.

Every configuration with $L \ge 3$ is sign-stable, which suggests that a deeper circuit gives the geometry enough directions to follow the loss more smoothly. Even in those configurations, the association remains moderate, and a practitioner cannot know in advance which sign the next seed will produce. We therefore treat $\deff$ as a description of the training state rather than a forecast of the final result.

\begin{table}[htbp]
\centering
\caption{Per-run Pearson correlation between $\deff$ and test accuracy, counting a cell as sign-stable only when all three seeds are positive.}
\label{tab:corr}
\resizebox{0.55\columnwidth}{!}{%
\begin{tabular}{lccc}
\toprule
\bfseries\boldmath Statistic & \bfseries\boldmath CIFAR-10 & \bfseries\boldmath EuroSAT & \bfseries\boldmath All \\
\midrule
Cells with positive mean & 7/7 & 6/7 & 13/14 \\
Sign-stable cells & 3/7 & 4/7 & 7/14 \\
Cells whose range crosses zero & 4/7 & 3/7 & 7/14 \\
Sign-stable cells with $L \ge 3$ & 2/2 & 2/2 & 4/4 \\
Mean of cell means & $+0.27$ & $+0.25$ & $+0.26$ \\
Widest min-to-max range & 1.15 & 1.15 & 1.15 \\
\bottomrule
\end{tabular}}
\end{table}

\subsubsection{Q2 Part 1: Depth and Capacity}
Table~\ref{tab:rq2} shows that deeper circuits expose more active directions, both in absolute terms and as a fraction of the parameter count. Most of this fill occurs in the step from one to two layers. A likely reason is that the entangling pattern of the first layer places later rotations in states where their generators no longer have a definite value, which gives those directions a nonzero variance by Eq.~\eqref{eq:diag-bound}. Beyond two layers, the fill saturates, since the remaining idle directions are those whose generators still have a definite value, meaning that the prepared state is an eigenstate of the generator. A larger circuit therefore uses a larger fraction of its own parameters, and since the rank of $F$ is at most $r$ by Lemma~\ref{lem:rankF}, the larger count raises the ceiling on the number of independent directions.

Table~\ref{tab:ratios} puts the same measurements on a common scale. The initial $\deff$ grows with depth while its normalized value declines, so deeper circuits start with more active directions but share the sensitivity among them less evenly. The ratio of peak $\deff$ to mean $r$ lies between $0.23$ and $0.26$ for $L=2$ and $L=3$ on both datasets, and it falls to $0.17$ on CIFAR-10 and $0.14$ on EuroSAT at $L=4$. The participation ratio therefore spans only a small part of the active directions. Depth adds directions without flattening the diagonal, which is consistent with Lemma~\ref{lem:rankbound}, and the count $r$ caps $\deff$ without forcing it to follow.

The accuracy benefit of depth is where the two datasets differ. On CIFAR-10, more than half of the small gain is already reached at two layers, whereas EuroSAT keeps improving with depth and gains roughly twice as much over the sweep. The circuit gains active directions on both datasets alike, so circuit geometry alone does not explain the difference, which points to the data path. One hypothesis is that the eight-angle bottleneck discards more class information from EfficientNet-B0 features on CIFAR-10 than on EuroSAT. We did not run a wider-bottleneck ablation, and Swin-T reaches $86.44\%$ through the same bottleneck, so the bottleneck is not a hard limit and this explanation remains a hypothesis.

\begin{table}[!h]
\centering
\caption{Depth versus capacity for EfficientNet-B0, with the best final accuracy in bold and the second-best underlined.}
\label{tab:rq2}
\resizebox{0.85\linewidth}{!}{%
\begin{tabular}{lcccccccc}
\toprule
\bfseries\boldmath Dataset & \bfseries\boldmath $L$ & \bfseries\boldmath $p$ & \bfseries\boldmath Final acc.\ (\%) & \bfseries\boldmath Init $\deff$ & \bfseries\boldmath Peak $\deff$ & \bfseries\boldmath Mean $r$ & \bfseries\boldmath $r/p$ & \bfseries\boldmath Init grad-var \\
\midrule
\multirow{4}{*}{CIFAR-10}
 & 1 & 24 & 72.40 $\pm$ 3.15 & 5.76 $\pm$ 0.30 & 2.17 $\pm$ 0.27 & 9.3 $\pm$ 0.1 & 0.39 & $4.1\times 10^{-5}$ \\
 & 2 & 48 & 74.25 $\pm$ 1.05 & 9.13 $\pm$ 0.68 & 10.09 $\pm$ 2.19 & 38.2 $\pm$ 0.5 & 0.80 & $1.5\times 10^{-5}$ \\
 & 3 & 72 & \underline{74.79 $\pm$ 0.72} & 11.55 $\pm$ 0.30 & 16.30 $\pm$ 1.67 & 63.2 $\pm$ 0.4 & 0.88 & $2.1\times 10^{-5}$ \\
 & 4 & 96 & \textbf{75.70 $\pm$ 0.55} & 15.23 $\pm$ 2.28 & 14.64 $\pm$ 0.89 & 87.2 $\pm$ 0.2 & 0.91 & $2.5\times 10^{-5}$ \\
\midrule
\multirow{4}{*}{EuroSAT}
 & 1 & 24 & 74.12 $\pm$ 0.95 & 6.03 $\pm$ 0.71 & 3.71 $\pm$ 0.29 & 9.0 $\pm$ 0.1 & 0.38 & $8.9\times 10^{-5}$ \\
 & 2 & 48 & 77.71 $\pm$ 1.20 & 9.28 $\pm$ 0.23 & 9.79 $\pm$ 1.07 & 38.9 $\pm$ 0.2 & 0.81 & $4.4\times 10^{-5}$ \\
 & 3 & 72 & \underline{80.22 $\pm$ 1.63} & 12.65 $\pm$ 1.02 & 14.81 $\pm$ 1.58 & 63.4 $\pm$ 0.3 & 0.88 & $3.8\times 10^{-5}$ \\
 & 4 & 96 & \textbf{80.98 $\pm$ 0.32} & 15.30 $\pm$ 1.63 & 11.46 $\pm$ 1.13 & 84.7 $\pm$ 1.4 & 0.88 & $3.0\times 10^{-5}$ \\
\bottomrule
\end{tabular}}
\end{table}

\begin{table}[htbp]
\centering
\caption{Normalized geometry across depth for EfficientNet-B0, taken from the seed means in Table~\ref{tab:rq2}.}
\label{tab:ratios}
\resizebox{0.6\columnwidth}{!}{%
\begin{tabular}{lcccccc}
\toprule
\bfseries\boldmath Dataset & \bfseries\boldmath $L$ & \bfseries\boldmath $p$ & \bfseries\boldmath Init $\deff/p$ & \bfseries\boldmath Peak $\deff/p$ & \bfseries\boldmath Mean $r/p$ & \bfseries\boldmath Peak $\deff$/mean $r$ \\
\midrule
\multirow{4}{*}{CIFAR-10}
 & 1 & 24 & 0.24 & 0.09 & 0.39 & 0.23 \\
 & 2 & 48 & 0.19 & 0.21 & 0.80 & 0.26 \\
 & 3 & 72 & 0.16 & 0.23 & 0.88 & 0.26 \\
 & 4 & 96 & 0.16 & 0.15 & 0.91 & 0.17 \\
\midrule
\multirow{4}{*}{EuroSAT}
 & 1 & 24 & 0.25 & 0.15 & 0.38 & 0.41 \\
 & 2 & 48 & 0.19 & 0.20 & 0.81 & 0.25 \\
 & 3 & 72 & 0.18 & 0.21 & 0.88 & 0.23 \\
 & 4 & 96 & 0.16 & 0.12 & 0.88 & 0.14 \\
\bottomrule
\end{tabular}}
\end{table}

\subsubsection{Q2 Part 2: Gradient Variance at Initialization}
The gradient variance at initialization is the direct test for a BP, and Table~\ref{tab:gradvar} shows that it stays within the same order of magnitude at every depth on both datasets. The change across depths is a small factor rather than the exponential decay that defines a BP. On CIFAR-10, the variance does not even decrease monotonically with depth, which indicates that seed-level variation dominates any depth trend at this scale. The minimum over seeds also stays far from zero, so no individual seed shows a collapsed gradient. With eight qubits and at most four layers, the circuit is likely too small for the concentration of measure behind a BP to take hold. This result is therefore a finite-size observation, and Section~\ref{sec:future} discusses its extension to larger systems.

\begin{table}[htbp]
\centering
\caption{Initialization gradient variance versus depth for EfficientNet-B0, with the mean, minimum, and maximum over three seeds.}
\label{tab:gradvar}
\resizebox{0.55\columnwidth}{!}{%
\begin{tabular}{lccccc}
\toprule
\bfseries\boldmath Dataset & \bfseries\boldmath $L$ & \bfseries\boldmath $p$ & \bfseries\boldmath Mean & \bfseries\boldmath Min & \bfseries\boldmath Max \\
\midrule
\multirow{4}{*}{CIFAR-10}
 & 1 & 24 & $4.09\times 10^{-5}$ & $1.08\times 10^{-5}$ & $8.39\times 10^{-5}$ \\
 & 2 & 48 & $1.49\times 10^{-5}$ & $1.05\times 10^{-5}$ & $1.96\times 10^{-5}$ \\
 & 3 & 72 & $2.10\times 10^{-5}$ & $1.39\times 10^{-5}$ & $3.15\times 10^{-5}$ \\
 & 4 & 96 & $2.51\times 10^{-5}$ & $1.88\times 10^{-5}$ & $3.50\times 10^{-5}$ \\
\midrule
\multirow{4}{*}{EuroSAT}
 & 1 & 24 & $8.86\times 10^{-5}$ & $5.47\times 10^{-5}$ & $1.38\times 10^{-4}$ \\
 & 2 & 48 & $4.44\times 10^{-5}$ & $3.14\times 10^{-5}$ & $7.00\times 10^{-5}$ \\
 & 3 & 72 & $3.78\times 10^{-5}$ & $3.02\times 10^{-5}$ & $4.46\times 10^{-5}$ \\
 & 4 & 96 & $3.03\times 10^{-5}$ & $2.16\times 10^{-5}$ & $3.94\times 10^{-5}$ \\
\bottomrule
\end{tabular}}
\end{table}

\subsubsection{Q3 Part 1: Feature Scope at a Fixed Circuit}
The decisive measurement holds the circuit fixed at $L=2$, with $p=48$ and $n=8$, and varies only the backbone. Table~\ref{tab:rq3} shows that the initial effective dimension is nearly shared across the four backbones and stays far below $p$, with ResNet-50 sitting slightly lower on both datasets. This similarity is expected, because the QFIM sees each backbone only through eight angles that the bottleneck maps into a comparable range. By Proposition~\ref{prop:scale}, the comparison also reflects the shape of the diagonal rather than any global scale of $F$. Accuracy, by contrast, spans tens of points across the same backbones and follows the quality of the learned representation, rising from the random projection through the convolutional networks to the transformer.

The rank correlation between initial $\deff$ and final accuracy is $0.00$ on both datasets, and the rank correlation between peak $\deff$ and final accuracy is $-0.80$ on CIFAR-10 and $-1.00$ on EuroSAT. With four backbones, these coefficients are descriptive rather than inferential, but they show that neither the starting geometry nor the peak geometry orders the backbones by quality. The peak even tends to run opposite to accuracy, which is consistent with uninformative features forcing the circuit to keep exploring directions without finding a stable solution.

The sharpest comparison pairs RandProj with EfficientNet-B0, which share the same feature dimension and start from nearly identical geometry. Their accuracies nevertheless differ by more than forty points on both datasets. A scalar that cannot separate a learned feature map from a random projection of the same size is not a measure of feature quality.

\begin{table}[!h]
\centering
\caption{Feature scope at $L=2$ and $p=48$, with the best accuracy in bold and the second-best underlined.}
\label{tab:rq3}
\resizebox{0.8\linewidth}{!}{%
\begin{tabular}{llccccc}
\toprule
\bfseries\boldmath Dataset & \bfseries\boldmath Backbone & \bfseries\boldmath Feat.\ dim & \bfseries\boldmath Init $\deff$ & \bfseries\boldmath Peak $\deff$ & \bfseries\boldmath Final acc.\ (\%) & \bfseries\boldmath Best acc.\ (\%) \\
\midrule
\multirow{4}{*}{CIFAR-10}
 & RandProj & 1280 & 8.46 $\pm$ 0.20 & 18.72 $\pm$ 3.03 & 23.14 $\pm$ 0.97 & 24.48 $\pm$ 0.82 \\
 & EfficientNet-B0 & 1280 & 9.13 $\pm$ 0.68 & 10.09 $\pm$ 2.19 & 74.25 $\pm$ 1.05 & 75.49 $\pm$ 1.16 \\
 & ResNet-50 & 2048 & 6.63 $\pm$ 1.17 & 6.21 $\pm$ 0.42 & \underline{82.22 $\pm$ 0.43} & \underline{85.10 $\pm$ 1.31} \\
 & Swin-T & 768 & 9.08 $\pm$ 1.29 & 9.78 $\pm$ 2.15 & \textbf{86.44 $\pm$ 0.61} & \textbf{86.88 $\pm$ 0.73} \\
\midrule
\multirow{4}{*}{EuroSAT}
 & RandProj & 1280 & 8.03 $\pm$ 1.08 & 10.80 $\pm$ 3.13 & 34.69 $\pm$ 4.86 & 36.35 $\pm$ 4.89 \\
 & EfficientNet-B0 & 1280 & 9.28 $\pm$ 0.23 & 9.79 $\pm$ 1.07 & 77.71 $\pm$ 1.20 & 78.94 $\pm$ 1.29 \\
 & ResNet-50 & 2048 & 6.18 $\pm$ 0.52 & 5.68 $\pm$ 1.98 & \underline{87.88 $\pm$ 1.24} & \underline{88.85 $\pm$ 1.90} \\
 & Swin-T & 768 & 8.90 $\pm$ 1.51 & 5.66 $\pm$ 1.10 & \textbf{90.69 $\pm$ 0.39} & \textbf{91.38 $\pm$ 0.47} \\
\bottomrule
\end{tabular}}
\end{table}

\subsubsection{Q3 Part 2: Significance of Backbone Gaps}
Table~\ref{tab:rq3sig} confirms through paired comparisons across seeds that the backbone gaps are real and ordered. Every gap points in the expected direction, and all twelve paired $t$-tests are significant at the five percent level. The comparisons against RandProj produce the largest $t$ statistics, because those gaps are very large relative to the seed-to-seed noise. The smallest gap separates Swin-T from ResNet-50 on EuroSAT, where both backbones already approach the accuracy ceiling of the dataset, yet it still reaches significance.

With three seeds, the Wilcoxon signed-rank test attains its smallest possible $p$ value in every comparison. We therefore report it as a confirmation of the sign of each effect rather than as a powerful test, and the effect sizes, rather than the $p$ values, carry the argument.

\begin{table}[!h]
\centering
\caption{Paired accuracy gaps at $L=2$, where $\Delta$acc is in percentage points and $^{*}$ marks $p<0.05$.}
\label{tab:rq3sig}
\resizebox{0.7\linewidth}{!}{%
\begin{tabular}{llccccc}
\toprule
\bfseries\boldmath Dataset & \bfseries\boldmath Comparison & \bfseries\boldmath $\Delta$acc (pp) & \bfseries\boldmath $t$-stat & \bfseries\boldmath $t$ $p$-val & \bfseries\boldmath Wilcoxon $p$ & \bfseries\boldmath $n$ \\
\midrule
\multirow{6}{*}{CIFAR-10}
 & EfficientNet-B0 $-$ RandProj & $+51.11^{*}$ & 58.47 & 0.0003 & 0.250 & 3 \\
 & ResNet-50 $-$ RandProj & $+59.09^{*}$ & 77.00 & 0.0002 & 0.250 & 3 \\
 & Swin-T $-$ RandProj & $+63.30^{*}$ & 74.95 & 0.0002 & 0.250 & 3 \\
 & ResNet-50 $-$ EfficientNet-B0 & $+7.97^{*}$ & 17.32 & 0.0033 & 0.250 & 3 \\
 & Swin-T $-$ EfficientNet-B0 & $+12.19^{*}$ & 13.92 & 0.0051 & 0.250 & 3 \\
 & Swin-T $-$ ResNet-50 & $+4.22^{*}$ & 10.11 & 0.0096 & 0.250 & 3 \\
\midrule
\multirow{6}{*}{EuroSAT}
 & EfficientNet-B0 $-$ RandProj & $+43.02^{*}$ & 16.12 & 0.0038 & 0.250 & 3 \\
 & ResNet-50 $-$ RandProj & $+53.19^{*}$ & 15.21 & 0.0043 & 0.250 & 3 \\
 & Swin-T $-$ RandProj & $+56.00^{*}$ & 19.20 & 0.0027 & 0.250 & 3 \\
 & ResNet-50 $-$ EfficientNet-B0 & $+10.17^{*}$ & 10.12 & 0.0096 & 0.250 & 3 \\
 & Swin-T $-$ EfficientNet-B0 & $+12.98^{*}$ & 23.37 & 0.0018 & 0.250 & 3 \\
 & Swin-T $-$ ResNet-50 & $+2.81^{*}$ & 4.83 & 0.0403 & 0.250 & 3 \\
\bottomrule
\end{tabular}}
\end{table}

\subsection{Qualitative Results}
\label{sec:results-qual}

\subsubsection{Geometry Band During Training}
Fig.~\ref{fig:rq1} places the untrained, peak, and late values of $\deff$ side by side at $L=2$, and Fig.~\ref{fig:rq1traj} shows the same quantity against epoch together with test accuracy. Two patterns stand out when the plots are read together. The geometry of every backbone moves within a band far below $p=48$, with RandProj on CIFAR-10 as the only curve that rises well above the others, toward a peak of $18.72$. The accuracy curves, in contrast, separate early and cleanly by backbone and keep that separation until the end of training. The two quantities therefore evolve along different axes, since the geometry band is nearly shared while the accuracy levels are not.

\begin{figure}[htbp]
\centering
\includegraphics[width=0.65\linewidth]{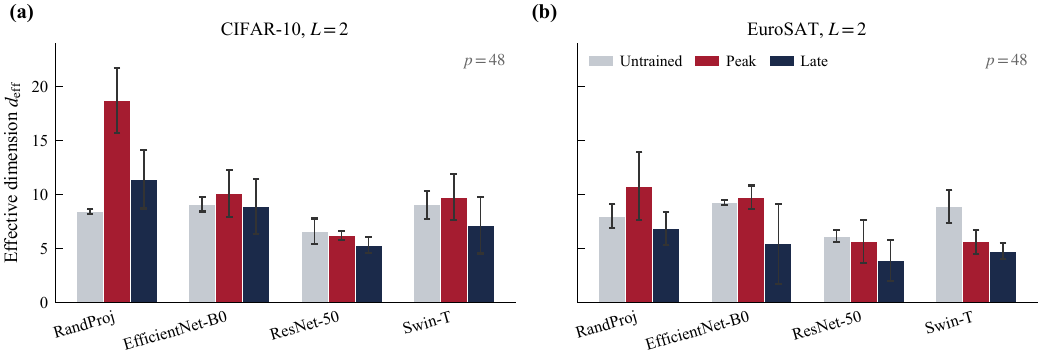}
\caption{Untrained, peak, and late $\deff$ at $L=2$, each a mean over three seeds.}
\label{fig:rq1}
\end{figure}

\begin{figure}[htbp]
\centering
\includegraphics[width=0.65\linewidth]{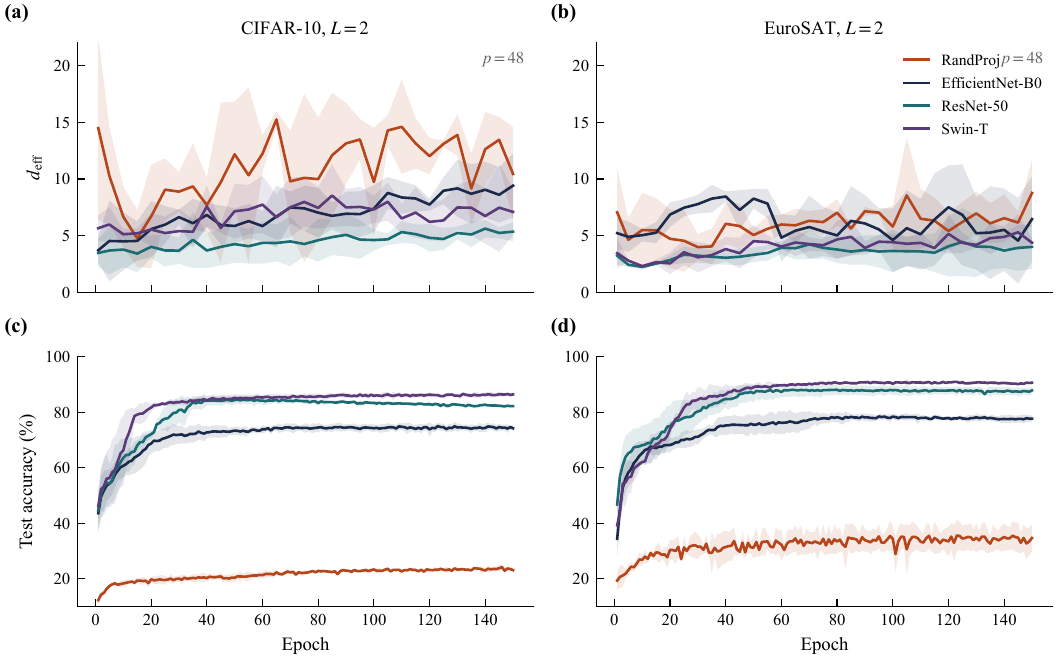}
\caption{Training dynamics of $\deff$ and test accuracy at $L=2$, with one-standard-deviation bands over three seeds.}
\label{fig:rq1traj}
\end{figure}

\subsubsection{Seed-Level Correlation}
Fig.~\ref{fig:rq1r} plots the per-seed Pearson correlation for every backbone and depth. The markers of a single configuration often fall on both sides of zero, which is the visual form of the 7 of 14 sign-unstable cells in Table~\ref{tab:corr}. The deeper configurations with $L \ge 3$ form the tightest and most consistently positive clusters, while the shallow and backbone-comparison configurations spread from $-0.72$ to $+0.87$. This view makes plain why a mean correlation alone would overstate the reliability of $\deff$ as a predictor.

\begin{figure}[htbp]
\centering
\includegraphics[width=0.65\linewidth]{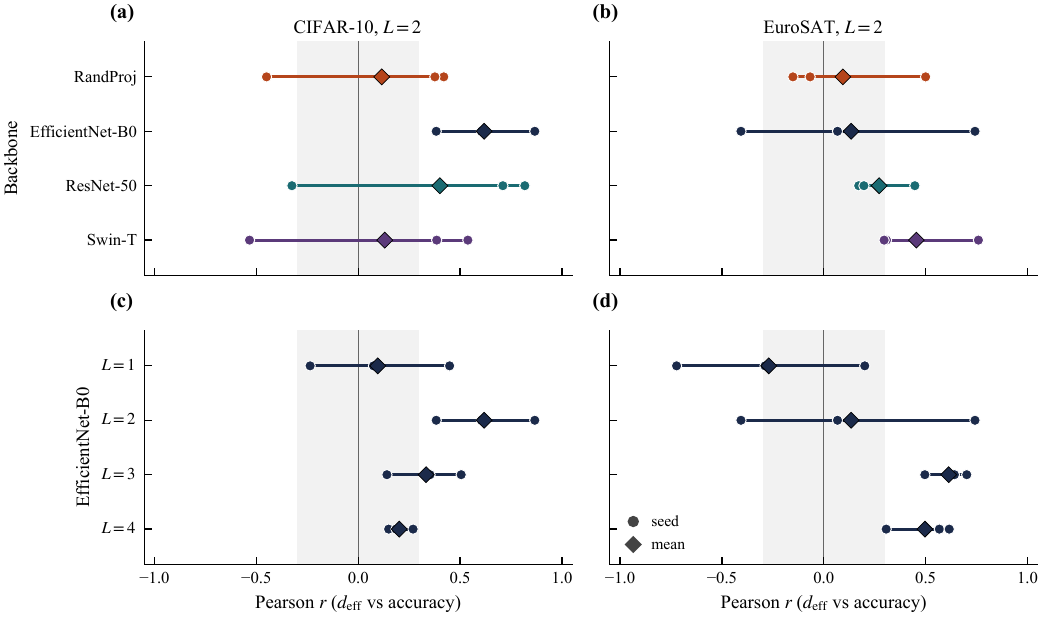}
\caption{Per-seed Pearson correlation between $\deff$ and test accuracy across backbones and depths.}
\label{fig:rq1r}
\end{figure}

\subsubsection{Capacity and Variance Across Depth}
Fig.~\ref{fig:rq2} shows the depth sweep across its three panels. Panel (a) shows accuracy responding to depth far more on EuroSAT than on CIFAR-10. Panel (b) shows the active-direction count rising steeply from $L=1$ to $L=2$ before approaching $p$, and panel (c) shows the gradient variance at initialization staying within a narrow band on a logarithmic scale. Fig.~\ref{fig:rq2train} follows the same quantities through the course of training. The late-training gradient variance remains of the same order as at the first checkpoint or larger, so the variance does not shrink as the circuit trains. Together, the two figures show capacity growing, trainability holding, and accuracy responding in a dataset-dependent way. These are three behaviors that a single scalar would conflate.

\begin{figure}[!h]
\centering
\includegraphics[width=0.8\linewidth]{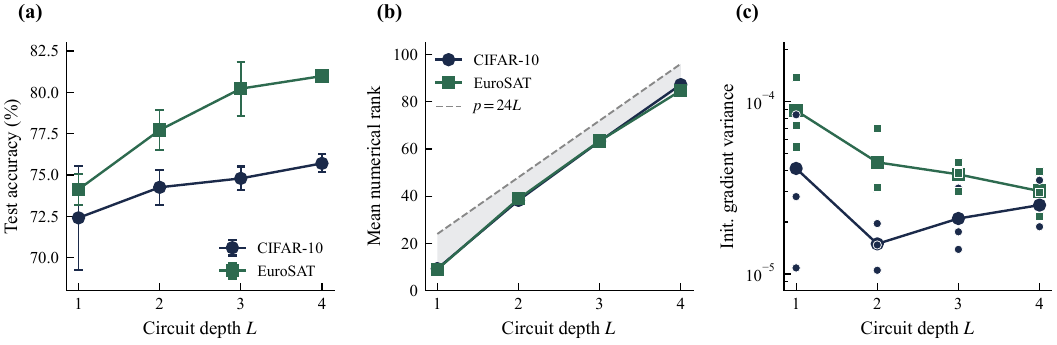}
\caption{Depth sweep on EfficientNet-B0 over three seeds: (a) test accuracy, (b) active-direction count, and (c) initialization gradient variance.}
\label{fig:rq2}
\end{figure}

\begin{figure}[htbp]
\centering
\includegraphics[width=0.75\linewidth]{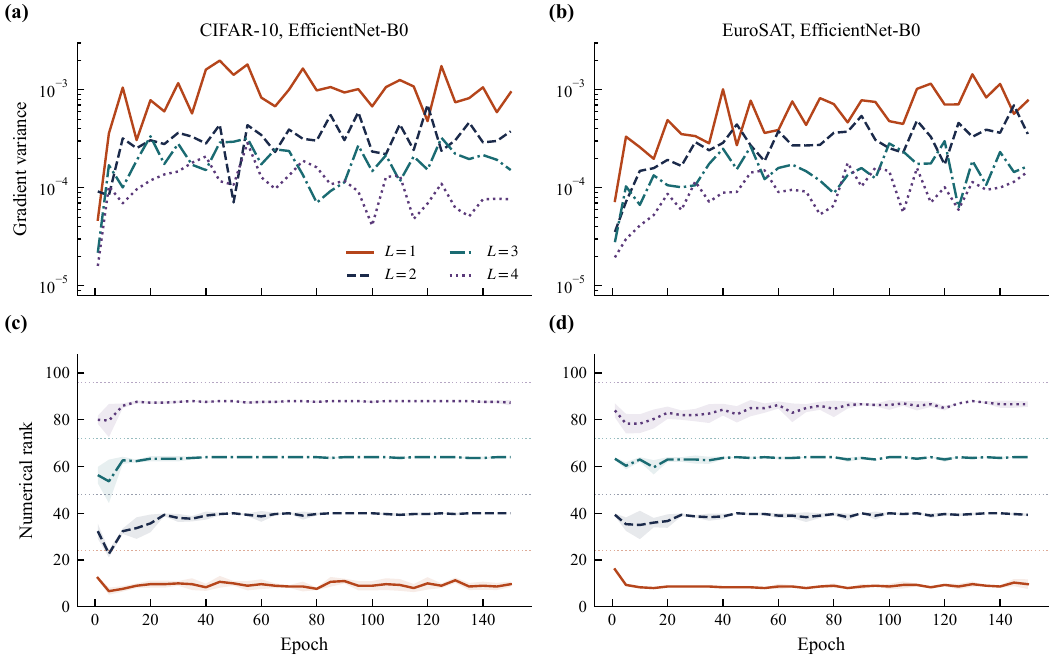}
\caption{Training-time gradient variance and active-direction count for EfficientNet-B0 as a mean over three seeds.}
\label{fig:rq2train}
\end{figure}

\subsubsection{Dissociation of Geometry and Accuracy}
Fig.~\ref{fig:rq3} places the four backbones in the plane of initial $\deff$ and final accuracy at the fixed circuit. If $\deff$ measured feature quality, the backbones would spread along the horizontal axis in step with the vertical axis. The points instead form a nearly vertical column, with RandProj at the bottom of a narrow horizontal band and Swin-T at the top. Fig.~\ref{fig:rq3b} shows the same dissociation as paired bars, where the geometry bars are close in height across backbones while the accuracy bars differ by more than fifty points. Together, these two views condense the central finding of the study into a single picture.

\begin{figure}[htbp]
\centering
\includegraphics[width=0.7\linewidth]{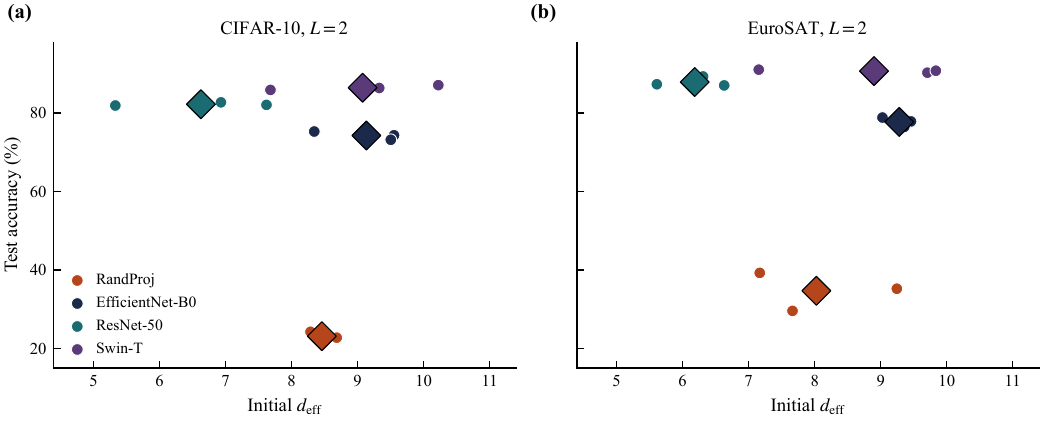}
\caption{Initial $\deff$ against final test accuracy at the fixed circuit with $L=2$ and $p=48$.}
\label{fig:rq3}
\end{figure}

\begin{figure}[htbp]
\centering
\includegraphics[width=0.7\linewidth]{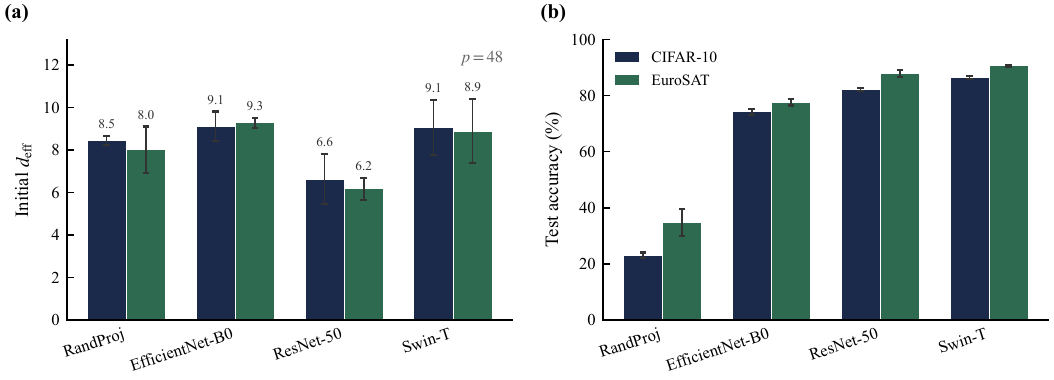}
\caption{Initial $\deff$ and final test accuracy of the four backbones at $L=2$ and $p=48$.}
\label{fig:rq3b}
\end{figure}

\subsection{Robustness and Consistency Analysis}
\label{sec:results-robust}

\subsubsection{Choice of Scalar}
Mixed scalars that combine the trace, the Frobenius norm, and a hard rank count also appear in practice. A representative form is
\begin{equation}
\label{eq:legacy}
\deff^{\mathrm{mix}} = \frac{\sum_k f_k}{\sqrt{\sum_k f_k^2}}\,\sqrt{r} = \sqrt{\deff\, r},
\end{equation}
which is a function of the same diagonal and equals the geometric mean of $\deff$ and $r$. Since $1 \le \deff \le r \le p$, it satisfies $\deff \le \deff^{\mathrm{mix}} \le r$, so it is bounded by $p$ but sits between the participation ratio and the active-direction count. It therefore blends two different quantities and cannot be read as a count of directions.

Fig.~\ref{fig:formula} shows the paired values of the two scalars for every logged checkpoint. Over all paired checkpoints, the Spearman correlation between Eq.~\eqref{eq:legacy} and Eq.~\eqref{eq:deff} is $0.953$, and the Pearson correlation is $0.907$. The mixed scalar reaches $36.59$ and never falls below the participation ratio, which stays in $[1.10, 21.54]$ as Lemma~\ref{lem:range} requires. The scatter splits into branches because the factor $\sqrt{r}$ separates points by active-direction count, and the color scale confirms that each branch corresponds to a band of $r$. Within this structure, the two scalars largely preserve the ordering of checkpoints, and all three findings hold under either definition. We keep Eq.~\eqref{eq:deff} for the clear meaning that Lemma~\ref{lem:range} and Lemma~\ref{lem:rankbound} give it.

\begin{figure}[htbp]
\centering
\includegraphics[width=0.5\linewidth]{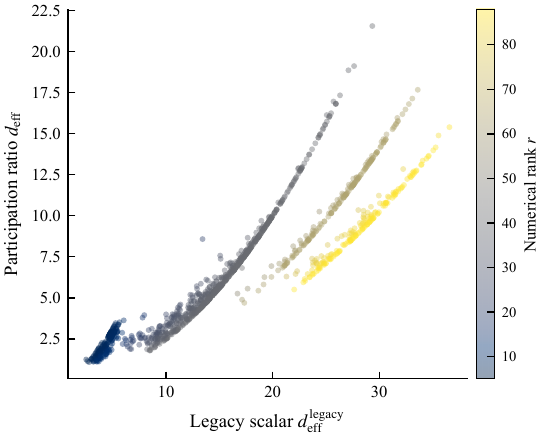}
\caption{Participation ratio against the mixed scalar of Eq.~\eqref{eq:legacy} over all QFIM checkpoints, colored by the active-direction count $r$.}
\label{fig:formula}
\end{figure}

\subsubsection{Cross-Dataset Replication}
Five findings replicate on both datasets. The effective dimension stays far below $p$ in every configuration, and the correlation with accuracy crosses zero in a comparable share of configurations on each dataset. The active-direction count fills most of the parameter count as depth grows, while the gradient variance at initialization stays within one order of magnitude at every depth. At the fixed circuit, the initial $\deff$ of the four backbones clusters in the same narrow band while accuracy follows the same backbone order.

Only one finding depends on the dataset rather than on the circuit. The accuracy gain from depth is roughly twice as large on EuroSAT as on CIFAR-10, which is consistent with a difference in the information that the bottleneck preserves for each dataset, although we did not test this directly.

\subsubsection{Seed Stability}
The backbone accuracy order, from RandProj through EfficientNet-B0 and ResNet-50 to Swin-T, is identical in every individual seed on both datasets. This identity is the strongest evidence that features, rather than circuit geometry, set the accuracy. The seed spread of final accuracy stays small for all learned backbones and is largest for RandProj on EuroSAT, as expected for unstructured features. The gap between final and best accuracy remains within a few points, and ranking by best accuracy gives the same backbone order as ranking by final accuracy. The conclusions therefore do not depend on the choice of endpoint.

\section{Discussion}
\label{sec:discussion}

\subsection{What the Effective Dimension Measures}
The three questions lead to one reading of the scalar. The quantity $\deff$ is a well-posed summary of the local QFIM diagonal that reports how many circuit directions are active and how evenly the sensitivity is shared among them. The dynamics show that this summary lives in a band well below $p$ and changes direction with depth. The depth sweep shows it growing together with the active-direction count that bounds it, which is the direct response to making the circuit deeper. The feature-scope experiment shows that it does not order backbones by quality. The appropriate name for $\deff$ is therefore a circuit-geometry observable rather than a feature-quality or accuracy predictor.

This reading follows from the structure of the model and not only from the numbers. Accuracy belongs to the full composition in Eq.~\eqref{eq:model}, which contains the frozen map $B$, the bottleneck, the circuit, and the head. The QFIM, in contrast, sees the backbone only through the $n$ angles that pass the bottleneck. Changing $B$ can therefore reshape the loss landscape of the composition substantially while leaving the local circuit geometry nearly unchanged. The RandProj control keeps this separation honest, because without it one could argue that the three learned backbones simply produce similar feature maps. The learned maps are clearly not similar to a random projection, yet $\deff$ still fails to separate them at initialization, and its peak runs opposite to accuracy.

\subsection{Capacity, Depth, and What Is Not a Plateau}
Depth reliably increases the number of active directions, while the conversion of that extra capacity into accuracy depends on the data and on the rest of the pipeline. On CIFAR-10, the small gain from depth is consistent with the hypothesis that the eight-angle bottleneck discards class information from EfficientNet-B0 features, whereas on EuroSAT the same bottleneck leaves enough structure for a deeper circuit to help. We did not run a wider-bottleneck ablation, and Swin-T reaches $86.44\%$ through the same bottleneck, so this remains a hypothesis. A depth study that reported only accuracy would have read the CIFAR-10 curve as a failure of depth. The active-direction counts show instead that the circuit did grow, so the explanation should be sought in the data path before the circuit is blamed.

The same care applies to trainability. A small $\deff$ is a participation ratio rather than a variance, so it does not indicate a plateau. A flat accuracy-versus-depth curve does not indicate a plateau either, since it can arise from a bottleneck while the gradients remain healthy. Both readings are common in practice, yet both are inconsistent with the measurements here, where the gradient variance at initialization stays finite across depth and seeds.

The dynamics add a final caution about using the scalar for decisions. The association between $\deff$ and accuracy is real in some configurations and steadier in deeper circuits, but its sign is not stable across seeds overall. Any early-stopping or model-selection rule built on $\deff$ therefore requires validation before use.

\subsection{A Practical Protocol for Hybrid Models}
Table~\ref{tab:protocol} distills the measurements into a protocol for hybrid models, pairing each practical question with the diagnostic that answers it and with the supporting evidence from this study. The central recommendation is a division of labor among these diagnostics. Geometry diagnostics tell a practitioner whether the circuit has active directions and how evenly it uses them. The gradient variance tells whether the circuit is trainable at initialization, while accuracy or a feature probe tells which backbone to use and how good the resulting model is.

\begin{table}[htbp]
\centering
\caption{Practical protocol pairing each hybrid-model question with the diagnostic that answers it.}
\label{tab:protocol}
\resizebox{0.9\linewidth}{!}{%
\begin{tabular}{lll}
\toprule
\bfseries\boldmath Question & \bfseries\boldmath Diagnostic & \bfseries\boldmath Evidence in this study \\
\midrule
Does the circuit have active directions? & $r$ and $r/p$ & $r/p$ rises from $0.38$ to $0.91$ with depth \\
\midrule
How evenly is sensitivity shared? & $\deff$ and $\deff/r$ & Peak $\deff$/mean $r \le 0.26$ for $L \ge 2$ \\
\midrule
Is the circuit trainable at initialization? & Gradient variance over seeds & Stays between $10^{-5}$ and $10^{-4}$ for $L \le 4$ at $n=8$ \\
\midrule
Which backbone should be used? & Accuracy or a feature probe & Rank correlation of initial $\deff$ with accuracy is $0.00$ \\
\midrule
Does $\deff$ forecast accuracy? & Accuracy curve & Sign unstable in 7 of 14 cells \\
\bottomrule
\end{tabular}}
\end{table}

\section{Scope and Future Directions}
\label{sec:future}
This study focuses on noise-free statevector simulation of an eight-qubit layered ansatz with a diagonal QFIM estimate, a setting chosen so that the geometry of every run could be logged exactly at many checkpoints. Several directions extend this setting in natural ways. The first is realism about the hardware, since finite-shot estimation and device noise shift small diagonal entries and can move both $r$ and $\deff$. Repeating the protocol under shot noise and realistic noise models would show how the diagnostics behave on hardware. The second is the scale of the circuit, because a BP is an asymptotic property of the gradient variance. Tracking that variance and the active-direction count as the qubit count grows would turn the finite-size observation at $n=8$ into a scaling statement.

A third direction concerns the geometry of $F$ beyond the diagonal. Full-matrix estimates of $F$ at selected checkpoints would add the off-diagonal couplings and the exact rank to the diagonal picture used here. Other ansatz families and encodings would test how broadly the reading of $\deff$ transfers, and a wider bottleneck would test whether the eight-angle bottleneck limits CIFAR-10 accuracy while keeping the circuit itself small.

A fourth direction refines the analysis of dynamics and features. Correlation measures that remove the shared training trend of two time series would sharpen the relation between $\deff$ and accuracy, especially when combined with geometry evaluated at trained rather than initial states. Linear probes of $B(x)$ would add a direct classical reference for each backbone in the feature-scope protocol. Finally, validated diagnostics could be brought inside optimization, for instance to precondition training with QNG or to trigger geometry-aware early stopping. The present measurements indicate which of these rules need validation, and such rules should be checked on held-out runs before they are trusted.

\section{Conclusion}
\label{sec:conclusion}
We studied what the quantum Fisher effective dimension measures in hybrid QNNs. We defined $\deff$ as the participation ratio of the QFIM diagonal and stated the properties that make it readable as a dimension. The scalar lies between one and the parameter count, never exceeds the active-direction count, which itself bounds the rank of $F$, and is invariant to scale. In simple terms, $\deff$ reports how many circuit directions are active and how evenly they share the sensitivity.

Three findings emerged on a controlled grid of two datasets, four frozen backbones including a random-projection control, four circuit depths on eight qubits, and three seeds. First, the effective dimension evolves during training yet stays well below the parameter count. Its correlation with test accuracy ranges from $-0.72$ to $+0.87$, and the sign of that correlation is unstable across seeds in half of the configurations. Second, depth fills the active-direction count from $0.38p$ to $0.91p$ while the gradient variance at initialization stays between $10^{-5}$ and $10^{-4}$, so no BP appears in this finite-size regime of eight qubits. Third, at a fixed circuit, backbones whose CIFAR-10 accuracy spans $23.14\%$ to $86.44\%$ share an initial effective dimension between $6$ and $9$.

The practical lesson is a division of labor among the diagnostics. Practitioners should use $\deff$ and the active-direction count to inspect the circuit, the gradient variance to assess trainability, and accuracy or a feature probe to judge the features. A small $\deff$ or a flat accuracy-versus-depth curve should not be read as a BP unless the gradient variance has been measured directly.

\section*{CRediT Authorship Contribution Statement}
M.M.M: Conceptualization, Methodology, Software, Formal Analysis, Investigation, Data Curation, Visualization, Writing: Original Draft.
A.K: Supervision, Validation, Project Administration, Writing: Review and Editing.
M.A.A: Supervision, Resources, Writing: Review and Editing.

\section*{Data Availability}
No new data were collected for this study. The publicly available datasets analyzed are CIFAR-10 (\url{https://www.cs.toronto.edu/~kriz/cifar.html}) and EuroSAT (\url{https://github.com/phelber/eurosat}). The experimental grid, logged metrics, and evaluation protocol are described in the manuscript.

\section*{Funding}
This research received no external funding.

\section*{Competing Interests}
The authors declare that they have no competing interests.

\bibliographystyle{IEEEtran}
\bibliography{references}

\end{document}